\documentclass[copyright,creativecommons]{eptcs}
\providecommand{\event}{ICLP 2026}

\usepackage{iftex}
\ifpdf
  \usepackage{underscore}
  \usepackage[T1]{fontenc}
\else
  \usepackage{breakurl}
\fi

\usepackage{amsmath,amssymb,amsthm}
\usepackage{booktabs}
\usepackage{graphicx}
\usepackage{listings}
\usepackage{xcolor}
\usepackage{enumitem}
\usepackage{microtype}
\usepackage{needspace}
\usepackage{tikz}
\usetikzlibrary{arrows.meta,positioning,fit,backgrounds}

\newtheorem{definition}{Definition}
\newtheorem{proposition}{Proposition}

\newcommand{\asp}{\textnormal{\textsc{ASP}}}
\newcommand{\eps}{\textnormal{\textsc{EPS}}}
\newcommand{\proj}[1]{\mathcal{P}_{#1}}

\lstdefinelanguage{ASP}{
  morekeywords={not},
  sensitive=true,
  alsoletter={\#},
  morecomment=[l]{\%},
  morestring=[b]",
}

\title{Explainable Belief Harmonization\\under Dynamic Epistemic Partitions}
\author{Adam Kostka
\institute{Warsaw University of Technology, Warsaw, Poland}
\email{adam.kostka.stud@pw.edu.pl}
\and
Jaros{\l}aw A. Chudziak
\institute{Warsaw University of Technology, Warsaw, Poland}
\email{jaroslaw.chudziak@pw.edu.pl}}
\def\titlerunning{Explainable Belief Harmonization under Dynamic Epistemic Partitions}
\def\authorrunning{Kostka \& Chudziak}

\begin{document}
\maketitle

\begin{abstract}
Existing approaches to multi-agent belief combination have established mature foundations for combining uncertain beliefs under common assumptions: consensus methods use iterative averaging, logic-based methods resolve conflicting knowledge bases, and epistemic logic analyzes agents' information states. Typically, these approaches assume that the structure determining what each agent can represent remains fixed. However, in many scenarios, agents gain or lose observational capacity during execution, and what was once admissible may become structurally impossible. This paper presents a formal framework for handling such runtime changes in epistemic partitions over continuous belief profiles. A hybrid approach exploits the advantages of answer set programming in elaboration tolerance, declarative integrity constraints, and explanations, with the numerical flexibility of Python. The framework applies to domains where agents operate at heterogeneous and possibly changing levels of resolution, and provides formal guarantees of admissibility preservation under refinement, unique mass-preserving repair under coarsening, and explanation completeness. Evaluation across 100 randomly generated topology changes confirms complete violation detection and explanation coverage.
\end{abstract}

\section{Introduction}\label{sec:introduction}

Combining uncertain beliefs for multiple agents is an established problem with mature theories for its solution, including consensus theory \cite{degroot1974consensus,hegselmann2002opinion}, logic-based merging \cite{konieczny2002merging,konieczny2011logicbasedmerging}, and epistemic logic \cite{fagin2004reasoning}. These theories provide effective solutions for averaging over multiple iterations, merging conflicting knowledge bases, and logical analysis of agents' information states. In aggregate, they provide a solid foundation for combining uncertain beliefs for multiple agents when the representational structure is known a priori.

However, in many practical cases, this representational structure is dynamic. Agents gain or lose observational capacity, new agents can be added, or existing agents can be removed, and previously admissible beliefs can become structurally inconsistent with respect to the new structure \cite{fagin2004reasoning}. The system must recognize this inconsistency, identify its cause, correct it, and continue. Dynamic representational structures occur naturally in systems where agents have different and changing levels of resolution.

For example, imagine a system where multiple sensors provide probability distributions over a set of conditions, but at different levels of resolution, which can be dynamic. A high-resolution LiDAR sensor can distinguish ten different surfaces, a camera can aggregate them at five levels, and an ultrasonic sensor can distinguish between obstacles and no obstacles. If it rains and the camera's resolution changes or a sensor fails, the representational structure changes, and previously admissible beliefs may no longer be valid. The existing theories for combining uncertain beliefs assume fixed representational structures, while logic-based merging assumes discrete knowledge bases. The nearest existing solution for combining uncertain beliefs is Equibel \cite{vicol2015equibel}, where discrete propositional beliefs are combined on a graph topology. The problem is how to handle dynamic changes to a continuous belief structure and automatically explain the changes.

In order to address this problem, the study sets out to examine how a system can handle changes to agents' observational capabilities during belief harmonization, while maintaining consistency and explaining what changed. The main argument is that \asp{} is able to provide an abstraction boundary for the discrete structural part of the epistemic constraints. \asp{} is also able to provide a declarative structural controller for reconfiguration, integrity, violation localization, and explanation, while Python is used for the numerical part. Three hypotheses are developed to support this argument. The first hypothesis (H1) is that the management of partition changes is possible using \asp{} facts without modifying the rules, thus capturing the elaboration tolerance principle \cite{mccarthy1998elaboration} in a novel domain. The second hypothesis (H2) is that \asp{} integrity constraints are used to provide a declarative solution to the well-formedness of the structure, thus capturing malformed edits. The third hypothesis (H3) is that the solver is able to automatically provide explanation atoms that are precise with regard to the distinctions and violations.

\begin{figure}[t]
\centering
\begin{tikzpicture}[
    scale=0.85, transform shape,
    venn circle/.style={draw=black!70, circle, minimum size=6.0cm, fill=black!10, fill opacity=0.4, text opacity=1, thick},
    node distance=0mm,
    font=\small,
    align=center
]
    \coordinate (C1) at (90:2.0);
    \coordinate (C2) at (210:2.0);
    \coordinate (C3) at (330:2.0);

    \node[venn circle] at (C1) {};
    \node[venn circle] at (C2) {};
    \node[venn circle] at (C3) {};

    \node[font=\small\bfseries, text=black!90] at (0, 3.7) {Logic \& Epistemic\\Structure};
    \node[font=\small\bfseries, text=black!90] at (-2.8, -2.4) {Continuous Beliefs\\\& Consensus};
    \node[font=\small\bfseries, text=black!90] at (2.8, -2.4) {Hybrid\\Architectures};

    \node at (-1.2, 2.5) {DEL\\\cite{vanditmarsch2007del}};
    \node at (1.2, 2.5) {Konieczny\\\cite{konieczny2002merging}};
    
    \node at (-2.8, -0.8) {DeGroot\\\cite{degroot1974consensus}};
    
    \node at (2.8, -0.8) {Equibel\\\cite{vicol2015equibel}};
    
    \node at (0, -1.8) {D'Asaro\\\cite{dasaro2024epec}};

    \node[font=\small\bfseries, fill=white, fill opacity=0.9, text opacity=1, draw=black!60, rounded corners, inner sep=4pt] at (0, 0) {This work};

\end{tikzpicture}
\caption{Positioning relative to nearest approaches. Three overlapping areas represent epistemic structure and logic programming, continuous beliefs and consensus, and hybrid solver--driver architectures. This work sits at the intersection of all three.}
\label{fig:comparison}
\end{figure}

The contribution presented in this paper proposes a framework in which changes in partitions are treated as main events during runtime, and such changes can invalidate belief states previously deemed admissible. Although each concept, such as epistemic partitions, projection repair, hybrid \asp{}, and multi-shot solving, has been previously studied, the contribution is how these concepts are combined to facilitate the automatic detection and handling of changes in domains where agents operate with different and varying levels of resolution, such as sensor fusion, federated analytics, and collaborative diagnosis. Concretely, the paper contributes (i)~a formal model of dynamic epistemic partition structures, (ii)~a declarative \asp{} encoding supporting fact-only reconfiguration, violation localization, and explanation generation, and (iii)~an experimental evaluation confirming the three hypotheses.

The remainder of this paper is organized as follows: Section \ref{sec:related} discusses related work, Section \ref{sec:model} formalizes the problem, Section \ref{sec:encoding} describes the implementation, Section \ref{sec:evaluation} presents the empirical evaluation, Section \ref{sec:discussion} discusses scope and limitations, and Section \ref{sec:conclusion} concludes.

\section{Related Work}\label{sec:related}

The problem of achieving consistency among agents with differing observational capabilities is a problem at the intersection of several traditional research areas \cite{fagin2004reasoning,degroot1974consensus}. Figure~\ref{fig:comparison} illustrates how these areas overlap. There are three threads in the discussion: traditions of logic programming with regard to epistemic structure and uncertainty, hybrid architectures of solver and driver, and traditions of consensus and belief revision.

\subsection{Epistemic and Probabilistic Logic Programming}

The tradition of epistemic specifications starts with the idea of strong introspection due to Gelfond \cite{gelfond1991strong}, classical negation due to Gelfond and Lifschitz \cite{gelfondlifschitz1991classical}, then epistemic negation due to Shen and Eiter \cite{sheneiter2016epistemic}, autoepistemic expansions due to Cabalar et al.~\cite{cabalar2020autoepistemic}, properties of epistemic logic programs due to Costantini and Formisano \cite{costantini2024epistemic}, and more recent refinements due to Su et al.~\cite{su2025pearce}. Epistemic planning in multi-agent systems with the \asp{} paradigm investigates Kripke structures and sequences of actions due to Burigana and Fabiano \cite{burigana2020planning,fabiano2021lies}. All of these traditions demonstrate the potential of logic programming to make epistemic structure and uncertainty explicit, but they focus on qualitative reasoning in the logical framework and do not deal with quantitative probabilistic reasoning constrained by the representational abilities of the agents that may change during execution.

Probabilistic logic programming approaches the problem of uncertainty from a different angle. This tradition starts with probabilistic Horn abduction due to Poole \cite{poole1993horn}, continues with distribution semantics due to Sato \cite{sato1995distribution}, and then with the system of ProbLog and its weighted variant due to De Raedt et al.~\cite{deraedt2007problog,fierens2015weighted}. Answer set programming approaches include P-log due to Baral et al.~\cite{baral2009probabilistic}, LPMLN due to Lee et al.~\cite{lee2017lpmln}, and the recent Plingo system \cite{hahn2025plingo}. These systems reason probabilistically inside the logic itself. The design point of the present work is narrower: probability belongs to the numerical layer, while \asp{} represents structural admissibility and generates explanations. Neither tradition addresses what happens when the epistemic structure itself changes during execution.

\subsection{Hybrid Solver-Driver Architectures}

The most similar methodological predecessors are hybrid systems, which include declarative reasoning in the most expressive domain while moving other computations to an external layer. Research into the integration of \asp{} and description logics for the Semantic Web initiated the paradigm of ``a solver coupled with an external theory'' \cite{eiter2008semanticweb}. Subsequently, the paradigm of ``multi-shot \texttt{clingo} control'' has been proposed, which allows the solver state to be preserved across iterative solving processes \cite{gebser2019multishot}. In the area of language interoperability, Natlog has been proposed as an integration of logic programming within the Python deep learning stack \cite{tarau2023natlog}, while Janus is a system for multi-paradigm Prolog+Python programming \cite{swift2023janus}. Specifically, D'Asaro et al.'s \asp{}-based approach to epistemic probabilistic event calculus \cite{dasaro2024epec} combines \asp{} with probabilistic epistemic reasoning in a hybrid architecture, differing from the current approach in the focus on temporal event sequences rather than dynamic partition structures.

The above-mentioned predecessors confirm the applicability of the engineering pattern of ``a solver coupled with an external theory.'' In the current research, the pattern is followed, while the solver is \asp{}, and the external theory is Python, which is responsible for numerical operations involving belief structures. The approach is applied to the new domain of constrained probabilistic harmonization under various epistemic partitions, which has not been covered in the above-mentioned predecessors.

\subsection{Consensus, Belief Change, and Structural Dynamics}

Besides logic programming, there are other areas of study concerning consensus and merging of beliefs. DeGroot's model deals with averaging-based consensus under fixed interaction structures, where it is assumed that all agents have the same representational capacity \cite{degroot1974consensus}. Hegselmann and Krause's bounded confidence model restricts who an agent can listen to, leading to clustering instead of global consensus \cite{hegselmann2002opinion}. These models assume all agents have the same representational capacity; however, having partitions restrict what an agent can represent is a different kind of restriction. At a more basic level, Aumann's Agreement Theorem proves that Bayesian agents with a common prior cannot agree to disagree if their posteriors become common knowledge \cite{aumann1976agreeing}. Agents with different partitions do not fit this model because they cannot represent the same posterior, thus disagreeing is a structural, not irrational, property.

Logic-based belief merging using distance-based operators is applicable to the integration of conflicting propositional knowledge bases with explicit integrity constraints \cite{konieczny2002merging,konieczny2011logicbasedmerging}, and this is well understood, although not in the continuous domain with agent-specific partitions. The Equibel system \cite{vicol2015equibel} is an implementation of consistency-based multi-agent belief change, using an \asp{} + Python workflow on graph topologies, and this is very close architecturally to the system presented here. The same type of hybrid symbolic-agent system is found in multi-agent LLM systems that have a logical layer for knowledge management and coordination \cite{kostka2024synergizing}. The main difference is the use of the solver, as \asp{} is used to compute the belief change in the Equibel system, while Python is used in the system presented here. Dynamic Epistemic Logic \cite{vanditmarsch2007del} is another approach to the problem, using modal logic to reason about the change in the knowledge, although without the use of \asp{} or continuous probability. Elaboration tolerance, the ability to adapt to new pieces of information by editing the facts rather than the programs, is the fundamental requirement driving knowledge representation \cite{mccarthy1998elaboration}, and this is achieved using \asp{} through the declarative separation of facts and rules. The need to provide explanations for the reasoning performed by the solver, as indicated by the recent appearance of explanation techniques for \asp{} programs \cite{fandinno2019xasp,alviano2024xasp2}, suggests that the problem is gaining popularity, and the current traditions do not handle the problem as presented, although they are related and address other aspects of the problem.

\section{Problem Formulation}\label{sec:model}

The preceding discussion revealed a gap at the intersection of epistemic structure, dynamic change, and continuous beliefs. This section formalizes this problem, including the structural objects, the changes that are made to these objects, and the repair that is performed to ensure consistency. The refinement and coarsening propositions help to explain why dynamic change is asymmetric. The projection uniqueness theorem justifies the repair operator as non-ad hoc. The explanation completeness theorem ties changes to symbolic beliefs to actual admissibility violations.

\subsection{Problem Statement}

The fundamental question that this research seeks to address is how a system can handle changes to agents' observational capabilities during belief harmonization, while maintaining consistency and explaining what changed.

Let $W$ be a finite set of possible worlds and $A$ a finite set of agents. Each agent $a \in A$ observes the world through an epistemic partition $\Pi_a$ of $W$ into cells \cite{fagin2004reasoning}. Those possible worlds within a cell are indistinguishable to agent $a$. Each agent $a \in A$ also has a probability distribution $\mu_a$ over $W$, representing its beliefs about the actual world. The partition imposes a constraint on the representational capabilities of the agent. The agent is unable to represent different probabilities for worlds that are indistinguishable to itself.

This constraint is the source of the problem. As partitions change over time, beliefs that are consistent with the structural constraints may become invalid.

During execution, various types of changes can occur. The most common are changes to an agent's partition. A cell can either refine or coarsen. Agents can also be added or deleted. With regard to the specific domain, the set of possible worlds $W$ is usually finite. More formally, with a set of agents with varying partitions and a belief profile $\mu = \{\mu_a\}_{a \in A}$, the goal is to satisfy the following three objectives:
\begin{enumerate}
\item Determine whether $\mu$ is structurally consistent with the current partition configuration.
\item If not, compute a repaired profile that restores consistency while staying close to the original beliefs.
\item Generate an explanation that identifies exactly which distinctions were gained or lost.
\end{enumerate}

\subsection{Partitions and Admissibility}

Returning to the sensor fusion scenario from the introduction, let
\[
W = \{w_0,\; w_1,\; w_2,\; w_3,\; w_4,\; w_5\}
\]
represent six surface conditions. The LiDAR sees three groups:
\[
\Pi_{\text{lidar}} = \{\{w_0,w_1,w_2\}, \{w_3,w_4\}, \{w_5\}\}.
\]
The camera has access to two groups: $\Pi_{\text{cam}} = \{\{w_0, w_1, w_2\}, \{w_3, w_4, w_5\}\}$. The ultrasonic sensor cannot distinguish any of the states. Therefore, $\Pi_{\text{ultra}} = \{W\}$. This gives us three sensors with varying levels of granularity over the same reality. In terms of \asp{}, this equates to \texttt{s1} (LiDAR), \texttt{s2} (camera), and \texttt{s3} (ultrasonic).

\begin{definition}[Epistemic Partition Structure]
\label{def:eps}
An epistemic partition structure (\eps{}) is a tuple $\mathcal{E} = \langle W, A, \{\Pi_a\}_{a \in A}\rangle$ where $W$ is a finite set of worlds, $A$ is a finite set of agents, and each $\Pi_a$ is a partition of $W$ into cells.
\end{definition}

\begin{definition}[Admissible belief profile]
\label{def:admissibility}
A belief profile $\mu = \{\mu_a\}_{a \in A}$, where each $\mu_a$ is a probability distribution over $W$, is admissible with respect to \eps{} $\mathcal{E}$ if for every agent $a \in A$ and every cell $C \in \Pi_a$: $\forall w, w' \in C\colon \mu_a(w) = \mu_a(w')$.
\end{definition}

This definition expresses an important and basic idea. An agent must have the same probabilities for worlds it cannot distinguish. If the camera cannot distinguish $w_3$, $w_4$, and $w_5$ because they are in the same cell, it must give each of these worlds the same probability. Singleton cells vacuously satisfy admissibility because the universal equality condition is over one world. The admissibility constraint is both semantic and runtime.

\subsection{Topology Operations}

In practice, the structure of the partitions need not be fixed, as agents may gain or lose the capacity to observe, and agents may be added to or removed from the system. Dealing with these issues requires elaboration tolerance \cite{mccarthy1998elaboration}. The two fundamental operations on partitions are defined as follows.

\begin{definition}[Refinement and coarsening]
\label{def:refinement}
Partition $\Pi'_a$ refines $\Pi_a$ (written $\Pi'_a \leq \Pi_a$) if every cell of $\Pi'_a$ is a subset of some cell of $\Pi_a$. Partition $\Pi'_a$ coarsens $\Pi_a$ if $\Pi_a$ refines $\Pi'_a$.
\end{definition}

Refinement corresponds to an agent gaining observational capacity. In the sensor fusion scenario, when a firmware upgrade improves the camera's resolution, it can now distinguish $w_0$ and $w_1$ from $w_2$. Its cell $\{w_0, w_1, w_2\}$ splits into $\{w_0, w_1\}$ and $\{w_2\}$. Coarsening corresponds to losing capacity. When rain degrades the LiDAR's resolution, its cells merge.

The two operations have opposite effects on admissibility.

\begin{proposition}[Refinement preserves admissibility]
\label{prop:refinement}
If $\mu_a$ is admissible with respect to $\Pi_a$ and $\Pi'_a$ refines $\Pi_a$, then $\mu_a$ is admissible with respect to $\Pi'_a$.
\end{proposition}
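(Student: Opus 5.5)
The argument is a direct unfolding of Definitions~\ref{def:admissibility} and~\ref{def:refinement}; no earlier result is needed beyond these. Admissibility of $\mu_a$ with respect to a partition is a per-cell condition: $\mu_a$ must be constant on every cell. So it suffices to show that $\mu_a$ is constant on each cell of $\Pi'_a$.

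Fix an arbitrary cell $C' \in \Pi'_a$ and arbitrary worlds $w, w' \in C'$. Since $\Pi'_a$ refines $\Pi_a$, Definition~\ref{def:refinement} gives a cell $C \in \Pi_a$ with $C' \subseteq C$. Hence $w, w' \in C$. Admissibility of $\mu_a$ with respect to $\Pi_a$ then yields $\mu_a(w) = \mu_a(w')$. As $C'$, $w$ and $w'$ were arbitrary, $\mu_a$ satisfies the admissibility condition for every cell of $\Pi'_a$, which is the claim.

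Two small points deserve a remark. First, $\mu_a$ itself is unchanged, so it remains a probability distribution on $W$, and refinement keeps the world set $W$ fixed. Second, singleton cells of $\Pi'_a$, which are typically created by splitting, satisfy the condition vacuously, as the paper already observes. The proposition is agent-local, so it lifts immediately to whole profiles: if only some agents' partitions are refined and the others are left unchanged, an admissible profile $\mu$ stays admissible.

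There is no real obstacle here. The only step requiring care is the existence of the enclosing cell $C$, which is exactly the definition of refinement; the rest is transitivity of set membership. This asymmetry is also what the converse direction (coarsening) lacks: a merged cell may join worlds of different cells of $\Pi_a$ that carry different masses, so the same argument fails there.
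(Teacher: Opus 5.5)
Your proof is correct and is essentially the paper's own argument. You fix a cell $C' \in \Pi'_a$, use the definition of refinement to get an enclosing cell $C \in \Pi_a$, and note that $\mu_a$ is constant on $C$ and therefore on $C' \subseteq C$; your extra remarks on lifting to whole profiles and on why coarsening fails are accurate but not needed.
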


\begin{proof}
Let $C' \in \Pi'_a$. Since $\Pi'_a$ refines $\Pi_a$, there exists $C \in \Pi_a$ with $C' \subseteq C$. Since $\mu_a$ is admissible with respect to $\Pi_a$, $\mu_a$ is uniform on $C$. Since $C' \subseteq C$, $\mu_a$ is uniform on $C'$.
\end{proof}

\noindent This means refinement events (the most common topology change in practice) require no repair step, reducing the runtime cost of gaining observational capacity to a single re-solve.

\begin{proposition}[Coarsening can violate admissibility]
\label{prop:coarsening}
There exist $\mu_a$ admissible with respect to $\Pi_a$ and $\Pi'_a$ coarsening $\Pi_a$ such that $\mu_a$ is not admissible with respect to $\Pi'_a$.
\end{proposition}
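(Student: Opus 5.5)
The statement is existential, so I will prove it by exhibiting an explicit counterexample, taken from the running sensor-fusion scenario so that it also illustrates the rain-degradation story. I will use the LiDAR partition $\Pi_a = \Pi_{\text{lidar}} = \{\{w_0,w_1,w_2\},\{w_3,w_4\},\{w_5\}\}$. For the coarsening I will merge the last two cells, giving $\Pi'_a = \{\{w_0,w_1,w_2\},\{w_3,w_4,w_5\}\}$. This is exactly the camera partition $\Pi_{\text{cam}}$.

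\textbf{Step 1: $\Pi'_a$ coarsens $\Pi_a$.} By Definition~\ref{def:refinement}, I need every cell of $\Pi_a$ to lie inside some cell of $\Pi'_a$. The cell $\{w_0,w_1,w_2\}$ is itself a cell of $\Pi'_a$. The cells $\{w_3,w_4\}$ and $\{w_5\}$ are both subsets of $\{w_3,w_4,w_5\}$.

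\textbf{Step 2: a belief admissible under $\Pi_a$.} I choose $\mu_a$ constant on each LiDAR cell but with different values on $\{w_3,w_4\}$ and $\{w_5\}$. For instance, set $\mu_a(w_i)=0.1$ for $i\in\{0,1,2\}$, $\mu_a(w_3)=\mu_a(w_4)=0.2$, and $\mu_a(w_5)=0.3$. The masses sum to $0.3+0.4+0.3=1$, so $\mu_a$ is a probability distribution. It is uniform on every cell of $\Pi_a$ (the singleton $\{w_5\}$ holds vacuously), so it is admissible by Definition~\ref{def:admissibility}.

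\textbf{Step 3: the violation.} The merged cell $\{w_3,w_4,w_5\}\in\Pi'_a$ contains $w_4$ and $w_5$ with $\mu_a(w_4)=0.2\neq 0.3=\mu_a(w_5)$. Hence the admissibility condition fails for $\Pi'_a$.

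There is no real obstacle here. The only care needed is to keep the masses normalized while making the merged cells carry distinct per-world values. The construction also shows the general mechanism: coarsening fails exactly when it merges cells whose per-world masses differ, which the later repair operator must address.
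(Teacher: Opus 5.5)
Your counterexample is correct and takes the same route as the paper: exhibit an admissible $\mu_a$ and a coarsening that merges cells carrying different per-world values. The paper uses the minimal instance $W=\{w_1,w_2\}$ with singleton cells, $\mu_a(w_1)=0.7$, $\mu_a(w_2)=0.3$, merged into $\{W\}$; your LiDAR-to-camera merge is an equally valid instance of the same idea.
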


\begin{proof}
Let $W = \{w_1, w_2\}$, $\Pi_a = \{\{w_1\}, \{w_2\}\}$, $\mu_a(w_1) = 0.7$, $\mu_a(w_2) = 0.3$. This is admissible (singleton cells). Coarsen to $\Pi'_a = \{\{w_1, w_2\}\}$. Now $\mu_a(w_1) \neq \mu_a(w_2)$ within the same cell.
\end{proof}

\subsection{Admissibility Repair}

When coarsening violates admissibility, beliefs must be repaired. The natural repair operation averages beliefs within each cell of the new partition.

\begin{definition}[Projection]
\label{def:projection}
Given a distribution $\mu_a$ over $W$ and a partition $\Pi_a$, the projection $\proj{a}(\mu_a)$ assigns to each $w \in C \in \Pi_a$ the cell average:
\[
\proj{a}(\mu_a)(w) = \frac{1}{|C|}\sum_{w' \in C} \mu_a(w').
\]
\end{definition}

By construction, $\proj{a}(\mu_a)$ is admissible with respect to $\Pi_a$. The following proposition shows that this repair is not merely convenient but uniquely determined.

\begin{proposition}[Projection is the unique mass-preserving admissible repair]
\label{prop:projection}
Given $\mu_a$ over $W$ and partition $\Pi_a$, the projection $\proj{a}(\mu_a)$ is the unique admissible distribution that preserves cell masses: for every $C \in \Pi_a$, $\sum_{w \in C} \proj{a}(\mu_a)(w) = \sum_{w \in C} \mu_a(w)$. It is also the unique $L_2$-minimizer among all admissible distributions over~$W$.
\end{proposition}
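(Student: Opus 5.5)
The plan is to prove the two uniqueness claims separately, both resting on one structural observation. A distribution $\nu$ over $W$ is admissible with respect to $\Pi_a$ exactly when it is constant on each cell. Such a $\nu$ is therefore determined by one value $c_C$ per cell $C \in \Pi_a$, with $\nu(w) = c_C$ for $w \in C$. Before either uniqueness claim, I would check that $\proj{a}(\mu_a)$ is itself a probability distribution. Nonnegativity is inherited from $\mu_a$. Summing the cell averages over each cell gives $\sum_{w \in C} \proj{a}(\mu_a)(w) = |C| \cdot \frac{1}{|C|}\sum_{w' \in C}\mu_a(w') = \sum_{w' \in C}\mu_a(w')$. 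Summing over cells then gives total mass $1$. This computation also shows that $\proj{a}(\mu_a)$ preserves cell masses, and admissibility holds by construction, as noted after Definition~\ref{def:projection}.

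\textbf{Uniqueness among mass-preserving repairs.} Let $\nu$ be admissible and cell-mass preserving. By admissibility, $\nu(w) = c_C$ on each cell $C$. Mass preservation gives $|C|\, c_C = \sum_{w' \in C}\mu_a(w')$. This forces $c_C = \frac{1}{|C|}\sum_{w' \in C}\mu_a(w')$, so $\nu = \proj{a}(\mu_a)$. This part is immediate.

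\textbf{$L_2$-minimality.} I would minimize $\|\nu - \mu_a\|_2^2$ over admissible distributions $\nu$. Writing $\nu$ in cell form, the objective decouples as $\sum_{C \in \Pi_a} \sum_{w \in C} (c_C - \mu_a(w))^2$. Within each cell, the bias--variance identity gives
\[
\sum_{w \in C}(c_C - \mu_a(w))^2 = |C|\,(c_C - \bar\mu_C)^2 + \sum_{w \in C}(\mu_a(w) - \bar\mu_C)^2,
\]
where $\bar\mu_C$ is the cell average of $\mu_a$. The second term does not depend on $\nu$. Hence $\|\nu - \mu_a\|_2^2 = \|\proj{a}(\mu_a) - \mu_a\|_2^2 + \sum_{C} |C|(c_C - \bar\mu_C)^2$. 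Equivalently, $\proj{a}$ is the orthogonal projection onto the linear subspace of cell-constant vectors. The residual $\mu_a - \proj{a}(\mu_a)$ sums to zero on each cell, so it is orthogonal to every cell-constant vector, and Pythagoras gives the same identity.

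The main obstacle is a subtlety in this step: the minimization is over admissible \emph{distributions}, not over all cell-constant vectors. The unconstrained minimizer could in principle fall outside the simplex, in which case the simplex constraints would change the answer. I would resolve this by observing that the unconstrained minimizer is $\proj{a}(\mu_a)$ itself, which I already showed is a genuine probability distribution. So it is feasible for the constrained problem, and hence also the constrained minimizer. Uniqueness then follows from the identity above. Any admissible $\nu \neq \proj{a}(\mu_a)$ has some $c_C \neq \bar\mu_C$, so it incurs a strictly positive excess $|C|(c_C - \bar\mu_C)^2$.
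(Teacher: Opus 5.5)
Your proposal is correct and follows the paper's proof. The mass-preservation uniqueness is the same cell-constant argument, and the $L_2$ part is the same orthogonal projection onto cell-constant vectors, followed by the observation that the unconstrained minimizer already lies in the simplex; your bias--variance identity just makes explicit the orthogonality and the strict uniqueness that the paper obtains by citing the orthogonal projection theorem.
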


\begin{proof}
Uniqueness under mass preservation.
Since admissibility implies $\proj{a}(\mu_a)(w) = c_C$ for all $w \in C$, mass conservation implies that
$c_C \cdot |C| = \sum_{w \in C} \mu_a(w)$.
This implies that $c_C = \frac{1}{|C|} \sum_{w \in C} \mu_a(w)$ is uniquely determined.

$L_2$ Optimality.
Define $V_a = \{ v \in \mathbb{R}^{|W|} \; : \; v \text{ is constant on each cell of } \Pi_a \}$. $V_a$ is a linear subspace of $\mathbb{R}^{|W|}$. By the orthogonal projection theorem, $\proj{a}(\mu_a)$ is the unique minimizer of $\| \mu_a - v \|_2$ over $v \in V_a$. Since $\mu_a$ is a probability distribution, the averages over each cell are nonnegative, and the total mass is preserved. Therefore, $\proj{a}(\mu_a) \in \Delta_{|W| - 1}$. The unconstrained minimizer already satisfies the simplex constraints, so it is also the unique minimizer over $V_a \cap \Delta_{|W|-1}$, the set of all admissible distributions.
\end{proof}

In this sensor fusion problem, when the camera's two cells with belief levels 0.180 and 0.153 combine to form a single cell, projection computes an average of all six worlds at 0.167. This uniqueness implies that no choice or configuration is required for the repair operator; it is determined purely based on the partition.

Having described this formal framework, the next section will show how this is achieved using \asp{}.

\section{Model Realization}\label{sec:encoding}

The structural layer directly corresponds to \asp{} facts and rules. The encoding has been designed to be concise, with just fifteen rules needed to cover any number of agents, any number of worlds, and any possible change in topology. To illustrate, we present a listing here in which we constrain cell identities to \texttt{C=1..10}, although in practice, we dynamically determine the appropriate range from the provided partition. The rules, however, are invariant.

\subsection{Partition Management}

The core encoding uses external facts to represent partition membership and derives indistinguishability directly.

\begin{lstlisting}[caption={Core \asp{} encoding for dynamic epistemic structure.},label={lst:core}]
world(w0). world(w1). world(w2). world(w3). world(w4). world(w5).
agent(s1). agent(s2). agent(s3). agent(critic).

% Partition cells -- external (changed dynamically)
#external partition(A, C, W) : agent(A), world(W), C = 1..10.

% Indistinguishability derived from shared cell membership
indist(A, W1, W2) :- partition(A, C, W1), partition(A, C, W2), W1 != W2.

% Well-formedness: every agent covers every world
covered(A, W) :- partition(A, _, W).
:- agent(A), world(W), not covered(A, W).

% Well-formedness: no world in two cells
:- partition(A, C1, W), partition(A, C2, W), C1 != C2.
\end{lstlisting}

The \texttt{\#external} declaration is the key enabler. It tells \texttt{clingo} that \texttt{partition/3} atoms may be asserted or retracted at any time without re-grounding the program \cite{gebser2019multishot}. A topology change reduces to a sequence of \texttt{assign\_external} calls in Python: when the camera's cell $\{w_0, w_1, w_2\}$ splits into $\{w_0, w_1\}$ and $\{w_2\}$, three atoms are retracted and three asserted. Zero rule changes. The solver automatically recomputes \texttt{indist/3} atoms under the new partition. This is elaboration tolerance \cite{mccarthy1998elaboration} in action.

\begin{proposition}[Elaboration tolerance]
\label{prop:elaboration}
Any reconfiguration of an \eps{} $\mathcal{E}$ to a new structure $\mathcal{E}'$ over the same world and agent sets is realizable by modifying only the external atoms in \texttt{partition/3}, without modifying the rules of the \asp{} encoding.
\end{proposition}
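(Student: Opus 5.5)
The plan is to build an explicit encoding map from partitions to sets of external atoms, show that every \eps{} over the fixed $W$ and $A$ is captured by some truth assignment to the externals, and then argue that the rules of Listing~\ref{lst:core} mention no specific partition. Concretely, for each agent $a$ I will fix an enumeration $C_1,\dots,C_{k_a}$ of the cells of $\Pi_a$. The partition is then encoded as the set of true atoms
\[
F(\mathcal{E}) = \{\texttt{partition}(a,i,w) : a \in A,\ w \in C_i \in \Pi_a\}.
\]
Every other instance of the \texttt{\#external} declaration is set to false.

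\textbf{Step 1: the atoms are available.} Since $|\Pi_a| \le |W|$, cell indices $1..|W|$ suffice. The listing uses the bound \texttt{C=1..10}, which covers the running example with $|W|=6$. In general the paper states that the range is computed from the world set, not from the partition. I will make this explicit: taking the range to be $1..|W|$ fixes the ground external atoms once for given $W$ and $A$. That range is independent of $\mathcal{E}$, so the same grounded program serves both $\mathcal{E}$ and $\mathcal{E}'$. This is the step I expect to be the main obstacle. The claim is only true if the index range is tied to $W$ rather than to the current partition. Otherwise a refinement producing more cells than the range would force re-grounding, which is a change to the program. I would resolve it by stipulating the bound $|W|$ and noting that no partition of $W$ exceeds it.

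\textbf{Step 2: correctness of the encoding.} I will show that under $F(\mathcal{E})$ the program has exactly one answer set, and that in it the derived atoms are the intended ones. First, $\texttt{indist}(a,w,w')$ holds iff $w \ne w'$ and both lie in a common cell of $\Pi_a$. Second, the two integrity constraints are satisfied. They succeed because each $\Pi_a$ covers $W$, and because each world appears in exactly one cell, so it receives exactly one index. The program is stratified given the externals, so uniqueness of the answer set is immediate.

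\textbf{Step 3: the reconfiguration.} For the move from $\mathcal{E}$ to $\mathcal{E}'$ there is nothing to prove beyond Step 2. The \texttt{assign\_external} calls are the symmetric difference $F(\mathcal{E}) \triangle F(\mathcal{E}')$: atoms in $F(\mathcal{E})\setminus F(\mathcal{E}')$ are set false and those in $F(\mathcal{E}')\setminus F(\mathcal{E})$ are set true. The rule text and the grounding are untouched, because the rules quantify over the variables $A, C, W$ and mention no specific partition. Step 2 applied to $\mathcal{E}'$ then gives the correct \texttt{indist/3} relation.
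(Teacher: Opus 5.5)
Your proof is correct and follows the paper's own justification. The paper gives that justification only informally, in the paragraph before the proposition: the \texttt{\#external} declaration lets a topology change be carried out as a batch of \texttt{assign\_external} calls on \texttt{partition/3}, after which the unchanged rules recompute \texttt{indist/3}. Your map $F$, the symmetric difference $F(\mathcal{E}) \mathbin{\triangle} F(\mathcal{E}')$, and the stratification argument (which mirrors Proposition~\ref{prop:encoding}) make this explicit.

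One correction to your Step~1: the paper says the cell-index range is determined ``from the provided partition'', not from $W$. Fixing it to $1..|W|$ is therefore an assumption you are adding, not one the paper states. It is the right assumption, because it is what makes the paper's claim that only new worlds force re-grounding hold for refinements that increase the number of cells.
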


The encoding is doing something more than just storing the data; it is a declarative structural controller. The well-formedness constraints filter out invalid partitions at solve time, the \texttt{indist/3} atoms generated determine the structure for admissibility, and the \texttt{violates/3} atoms determine where particular admissibility violations are, and the explanation rules calculate relation deltas that would otherwise require reimplementing the indistinguishability and violation logic procedurally in the numerical layer. The resulting proposition directly relates to the formal framework, showing that the formal definitions are implemented by the encoding.

\begin{proposition}[Encoding correctness]
\label{prop:encoding}
For any valid \eps{} input (each agent's partition covers every world with no overlapping cells), the \asp{} program in Listings~\ref{lst:core} and~\ref{lst:explain} has a unique stable model in which: (i)~$\texttt{indist}(a, w_1, w_2)$ holds iff $w_1$ and $w_2$ belong to the same cell in $\Pi_a$; (ii)~$\texttt{violates}(a, w_1, w_2)$ holds iff $w_1 \sim_a w_2$ and $\texttt{neq}(a,w_1,w_2)$ is asserted.
\end{proposition}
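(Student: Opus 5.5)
We treat the program of Listings~\ref{lst:core} and~\ref{lst:explain}, once the truth values of the external atoms are fixed, as an ordinary ground program. The external \texttt{partition/3} atoms and the asserted \texttt{neq/3} atoms then act as facts $F$. The argument uses the standard fact that a \emph{stratified} normal program with integrity constraints has at most one stable model, namely its perfect (iterated least) model. That model is stable exactly when it satisfies every constraint. So the proof has three parts: check stratification, check that the constraints are satisfied on valid input, and read off the derived atoms level by level.

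\textbf{Stratification and uniqueness.} First we build the predicate dependency graph. The atoms \texttt{world/1}, \texttt{agent/1}, \texttt{partition/3} and \texttt{neq/3} are facts or externals. The predicates \texttt{indist/3} and \texttt{covered/2} depend only positively on \texttt{partition/3}. The predicate \texttt{violates/3} depends positively on \texttt{indist/3} and \texttt{neq/3}. Default negation occurs only in the coverage constraint (\texttt{not covered}) and, we expect, in the explanation-delta rules of Listing~\ref{lst:explain}. Those rules compare old and new relation atoms, which are lower-stratum inputs, so no predicate depends negatively on itself. Hence the program is stratified and its candidate stable model is unique. It remains to show that this candidate survives the two well-formedness constraints. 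Under the hypothesis that $\Pi_a$ is a partition of $W$, every $(a,w)$ has some cell $C$ with \texttt{partition}$(a,C,w)$. Therefore \texttt{covered}$(a,w)$ is derived, and the coverage constraint is never violated. Disjointness of cells gives the second constraint directly. So exactly one stable model exists.

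\textbf{Claims (i) and (ii).} In the least model, \texttt{indist}$(a,w_1,w_2)$ is derived iff there is a $C$ with both \texttt{partition}$(a,C,w_1)$ and \texttt{partition}$(a,C,w_2)$, and $w_1\neq w_2$. This holds iff $w_1$ and $w_2$ are distinct members of the same cell of $\Pi_a$. The correspondence depends on the cell labels $C$ encoding $\Pi_a$ faithfully, with one label per cell, which is what the Python driver asserts. For \texttt{violates/3} we assume its rule in Listing~\ref{lst:explain} has body \texttt{indist}$(A,W_1,W_2)$ together with \texttt{neq}$(A,W_1,W_2)$. Then by the same one-step least-model argument, \texttt{violates}$(a,w_1,w_2)$ holds iff $w_1\sim_a w_2$ and \texttt{neq}$(a,w_1,w_2)$ is among the input facts.

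\textbf{Main obstacle.} We expect the main difficulty to lie in the details of Listing~\ref{lst:explain}, which we have not yet seen. The explanation rules may use negation over \texttt{indist} (for example "gained/lost distinction") or may introduce choice rules. In either case we must confirm that they stay stratified and add no further constraints, since either would break uniqueness. If such rules appear, we would first try a splitting-set argument. We would take the atoms of the core listing together with \texttt{violates/3} as the bottom part and show that the top part is stratified relative to it. A second subtlety is the convention on reflexive pairs. The rule requires $W_1 \neq W_2$, so (i) holds for distinct worlds only, and we state it that way.
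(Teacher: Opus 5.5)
Your proposal is correct and follows the same route as the paper's proof. Both treat the externals \texttt{partition/3} and \texttt{neq/3} as facts, observe that the program is stratified and so has a unique candidate model, note that the two integrity constraints hold on valid input, and read off \texttt{indist/3} and \texttt{violates/3} from the least model. Your version is more careful than the paper's in two places: you handle the default negation in the \texttt{newly\_dist}/\texttt{newly\_indist} rules, which stay stratified because they depend negatively only on the lower-stratum \texttt{indist} and \texttt{old\_indist}, and you point out that (i) holds only for $w_1 \neq w_2$.
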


\begin{proof}
The model is stratified, with the external world represented by the atoms \texttt{partition/3} and \texttt{neq/3}, while the derived atoms \texttt{indist/3}, \texttt{covered/2}, and \texttt{violates/3} are computed using Horn clauses. The two integrity constraints ensure that the model is not empty or contains overlapping regions. The unique stable model for valid input will have \texttt{indist}$(a, w_1, w_2)$ if and only if there is some cell $C$ that contains $w_1$ and $w_2$, and will have \texttt{violates}$(a, w_1, w_2)$ if and only if $w_1$ and $w_2$ are also marked as unequal by the Python code.
\end{proof}

\subsection{Hybrid Architecture}
\begin{figure}[t]
\centering
\includegraphics[width=0.65\linewidth]{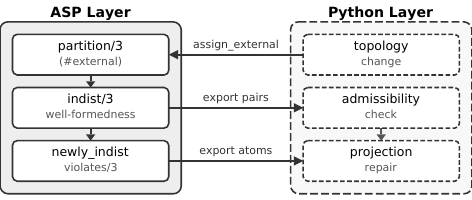}
\caption{Hybrid architecture. The \asp{} layer (left, solid border) manages partition structure, derives explanations, and localizes violations. The Python layer (right, dashed border) handles topology edits, numerical inequality computation, and belief repair.}
\label{fig:arch}
\end{figure}

The encoding scheme presented above is responsible for the structural layer. The current section is responsible for explaining the integration with the numerical layer. Figure~\ref{fig:arch} shows the division of labor. The \asp{} part is responsible for the partition structure, derivation of indistinguishability, well-formedness, violation localization, and explanation generation. Python is responsible for the belief storage, floating-point calculations, and re-projection repairs. Moreover, Python is responsible for calculating the numerical inequality facts (represented as \texttt{neq/3} facts that specify the world pairs with differing beliefs) and passing them to the \asp{} part, which calculates the structural illicitness using the \texttt{violates/3} predicate.

This division separates concerns by design: \asp{} represents the structural constraints and explanations, where elaboration tolerance and declarative violation localization are natural, while Python handles continuous probabilities and projection repair, where floating-point arithmetic is natural.

The multi-shot solving approach is based on the ground-and-edit approach, where the \asp{} program is grounded once, and the \texttt{\#external} atoms are edited and the solving is repeated as the topology changes. There is no need to re-ground the \asp{} program when the partition is changed. The only case that requires re-grounding is the addition of new worlds (new domain elements), and this is more expensive than editing the partition facts.

\subsection{Explanation Generation}\label{sec:explanation}

With the hybrid pipeline in place, the focus is now on the second main feature of the framework, which is to reveal the changes that take place when the topology is reconfigured. This is in line with the general goal of making the reasoning performed by \asp{} transparent \cite{fandinno2019xasp,alviano2024xasp2}. When the topology is modified, the system records the previous partition as \texttt{old\_partition/3} atoms and constructs explanation atoms using Negation as Failure.

\needspace{16\baselineskip}
\begin{lstlisting}[caption={Explanation and violation localization rules.},label={lst:explain}]
#external old_partition(A, C, W) : agent(A), world(W), C = 1..10.
old_indist(A, W1, W2) :- old_partition(A, C, W1), old_partition(A, C, W2),
                          W1 != W2.

newly_dist(A, W1, W2) :- old_indist(A, W1, W2), not indist(A, W1, W2),
                          W1 < W2.
newly_indist(A, W1, W2) :- indist(A, W1, W2), not old_indist(A, W1, W2),
                            W1 < W2.

% Violation localization
#external neq(A,W1,W2) : agent(A), world(W1), world(W2), W1<W2.
violates(A,W1,W2) :- indist(A,W1,W2), neq(A,W1,W2), W1<W2.
\end{lstlisting}

The atom \texttt{newly\_dist(A, W1, W2)} expresses that worlds $W_1$ and $W_2$ were indistinguishable for agent $A$ under the previous topology but are distinguishable under the new one. Conversely, the atom \texttt{newly\_indist(A, W1, W2)} expresses that two worlds previously distinguishable for agent $A$ have become indistinguishable. The rule \texttt{violates/3} closes this loop. Python will derive \texttt{neq/3} facts for pairs of worlds whose beliefs are different, and \asp{} is responsible for deriving \texttt{violates(A, W1, W2)} for those pairs that are supposed to be indistinguishable but are not.

\begin{proposition}[Explanation completeness]
\label{prop:explanation}
Given old \eps{} $\mathcal{E}$ and new \eps{} $\mathcal{E}'$, the explanation atoms $\{\texttt{newly\_dist}(a,w,w'), \texttt{newly\_indist}(a,w,w')\}$ capture exactly the symmetric difference of the indistinguishability relations ${\sim_a} \mathbin{\triangle} {\sim'_a}$. Furthermore, if the belief profile $\mu$ is admissible with respect to $\mathcal{E}$, then any admissibility violation of $\mu$ under $\mathcal{E}'$ must involve a pair in \texttt{newly\_indist}.
\end{proposition}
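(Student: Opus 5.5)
The plan is to reduce both claims to Proposition~\ref{prop:encoding}, which identifies the derived atoms with the formal relations, and then argue directly with sets and partitions.

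\emph{Symmetric difference.} Fix an agent $a$. Write $\sim_a$ for the indistinguishability relation of $\Pi_a$ (old) and $\sim'_a$ for that of $\Pi'_a$ (new), both restricted to distinct worlds.
\begin{enumerate}
\item Apply Proposition~\ref{prop:encoding}(i) twice. Applied to \texttt{partition/3}, it gives $\texttt{indist}(a,w,w')$ iff $w \sim'_a w'$. The rules for \texttt{old\_indist} are a literal copy of the \texttt{indist} rule over \texttt{old\_partition/3}, so the same argument gives $\texttt{old\_indist}(a,w,w')$ iff $w \sim_a w'$. Well-formedness of the old partition is assumed because it was a valid \eps{}.
\item The program stays stratified: \texttt{indist} and \texttt{old\_indist} are in the lowest stratum, and \texttt{newly\_dist} and \texttt{newly\_indist} use negation only on them. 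So negation as failure in the unique stable model coincides with classical falsity of the relation.
\item Reading off the rules, $\texttt{newly\_dist}(a,w,w')$ holds iff $w \sim_a w'$, $w \not\sim'_a w'$ and $w<w'$. Likewise $\texttt{newly\_indist}(a,w,w')$ holds iff $w \sim'_a w'$, $w \not\sim_a w'$ and $w<w'$.
\item Both relations are symmetric, so ordering the pair by $w<w'$ loses nothing. The union of the two atom sets is therefore exactly ${\sim_a}\mathbin{\triangle}{\sim'_a}$, up to this canonical orientation.
\end{enumerate}

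\emph{Violation claim.} Suppose $\mu$ is admissible for $\mathcal{E}$ and that $\mu_a$ violates admissibility under $\mathcal{E}'$. Then there are $w<w'$ in a common cell of $\Pi'_a$ with $\mu_a(w)\neq\mu_a(w')$, so $w\sim'_a w'$. I will show $w \not\sim_a w'$ by contradiction. If $w\sim_a w'$, then $w$ and $w'$ share a cell of $\Pi_a$. Admissibility with respect to $\Pi_a$ (Definition~\ref{def:admissibility}) then forces $\mu_a(w)=\mu_a(w')$, which is impossible. Hence $(w,w')\in{\sim'_a}\setminus{\sim_a}$, and by the first part $\texttt{newly\_indist}(a,w,w')$ holds. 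If \texttt{neq} is supplied correctly by Python, this pair is exactly a \texttt{violates} atom by Proposition~\ref{prop:encoding}(ii), so every \texttt{violates} atom lies inside \texttt{newly\_indist}. This is the contrapositive of Proposition~\ref{prop:refinement}, now stated pairwise rather than for whole partitions.

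\emph{Main obstacle.} The mathematics is trivial; the delicate step is the interface between the formal relations and the encoding. Three points need care:
\begin{itemize}
\item Proposition~\ref{prop:encoding} must apply to the \texttt{old\_partition} copy, which requires well-formedness of the old partition. No integrity constraint enforces this on \texttt{old\_partition}, so it has to be stated as a hypothesis.
\item Stratification is what justifies replacing \texttt{not} by classical negation.
\item Cell identifiers are arbitrary labels. The relations $\sim_a$ depend only on co-membership, so relabelling cells between topologies creates no spurious explanation atoms. I would state this explicitly.
\end{itemize}
Agent addition or removal is out of scope here, since both structures share $A$ and $W$. I would note this rather than prove anything about it.
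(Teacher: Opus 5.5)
Your proposal is correct and matches the paper's proof: both read \texttt{newly\_dist} and \texttt{newly\_indist} off as the two halves of ${\sim_a}\mathbin{\triangle}{\sim'_a}$, and both prove the violation claim by the same contradiction with admissibility of $\mu$ under $\mathcal{E}$. Your added care about stratification, the $W_1<W_2$ orientation, cell relabelling, and well-formedness of \texttt{old\_partition} spells out the encoding interface that the paper's proof takes for granted.
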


\begin{proof}
Define $w \sim_a w'$ iff $\exists C \in \Pi_a\colon w, w' \in C$. Then \texttt{newly\_dist}$(a,w,w')$ holds iff $w \sim_a w' \wedge \neg(w \sim'_a w')$, and \texttt{newly\_indist}$(a,w,w')$ holds iff $\neg(w \sim_a w') \wedge w \sim'_a w'$. These are exactly the two halves of ${\sim_a} \mathbin{\triangle} {\sim'_a}$. For the second claim: a violation requires $\mu_a(w) \neq \mu_a(w')$ with $w \sim'_a w'$. If $w \sim_a w'$ already held, admissibility of $\mu$ with respect to $\mathcal{E}$ would have required $\mu_a(w) = \mu_a(w')$, contradicting the violation. So the violating pair must satisfy $\neg(w \sim_a w') \wedge w \sim'_a w'$, which is exactly \texttt{newly\_indist}. The pre-admissibility assumption is necessary: without it, a pair already indistinguishable under $\mathcal{E}$ could already violate uniformity, and such a violation would persist without involving \texttt{newly\_indist}.
\end{proof}

\begin{figure}[t]
\centering
\includegraphics[width=0.88\linewidth]{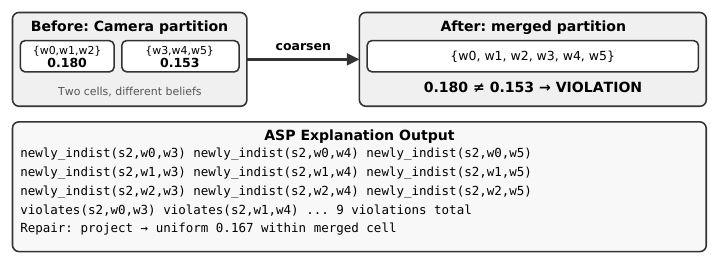}
\caption{Topology change and explanation generation. Left: the camera distinguishes two groups of surface conditions. Right: after coarsening, all conditions merge. The \asp{} layer derives \texttt{newly\_indist} pairs and \texttt{violates} atoms. Projection repair restores uniform beliefs.}
\label{fig:explanation}
\end{figure}

In the sensor fusion scenario, after the refinement done by the LiDAR, the solver asserts \texttt{newly\_dist(s1, w0, w2)} and \texttt{newly\_dist(s1, w1, w2)}, reflecting the fact that world $w_2$ is now distinguishable from worlds $w_0$ and $w_1$. Then, after the coarsening done by the camera, the solver asserts nine \texttt{newly\_indist} statements, each involving the pairing of one world from the first group ($w_0$--$w_2$) and one world from the second group ($w_3$--$w_5$). Figure~\ref{fig:explanation} illustrates this process. The explanation atoms immediately identify the basic cause of the admissibility violation, and the completeness guarantee ensures that no violation escapes the explanation mechanism.

With the encoding in place, the next section evaluates whether the formal properties hold experimentally.

\needspace{6\baselineskip}
\section{Evaluation}\label{sec:evaluation}

The evaluation process considers the three aspects of the declarative structural control thesis: reconfiguring via fact edits (H1), structural integrity (H2), and the quality of the explanations (H3). The sensor fusion scenario is a good example of this; the framework applies to domains matching Definition~\ref{def:eps}, and the random topology experiments conducted in Table~\ref{tab:scale} and Section~5.3 directly test the general formal object. All experiments use the multi-shot API of the \texttt{clingo} system~\cite{gebser2019multishot} accessed via the Python interface. All timings are the median of five runs and can be found in the repository accompanying this paper.

\subsection{Reconfiguration Stress Test}

H1, elaboration tolerance, is evaluated by performing 50 consecutive topology operations on a $20 \times 8$ configuration, with 20 worlds and 8--12 agents, including refinement, coarsening, adding and deleting agents, and a variety of intentionally erroneous edits. The \asp{} rules are kept unchanged throughout this experiment.

The median time to re-solve is 10.3\,ms, with a maximum time of 180.6\,ms on a cold cache, showing that multi-shot editing through external edits is indeed possible. All eleven intentionally erroneous edits are caught by integrity constraints during solve time, marked as UNSAT. The single coarsening operation leading to violations is caught by \texttt{violates/3} and fixed by re-projection. No changes to the rules are needed throughout this 50-step process.

Table~\ref{tab:scale} shows how grounding and solving scale with problem size. The main bottleneck is grounding \texttt{indist/3} atoms, which scales like $O(|A| \cdot |W|^2)$. The cost of re-solving after a topology change on an already grounded program has a median cost of 10\,ms, showing that our multi-shot approach indeed amortizes this cost effectively.

\begin{table}[t]
\centering
\caption{Scalability of the hybrid \asp{} encoding with random partitions.}
\label{tab:scale}
\small
\begin{tabular}{@{}ccccc@{}}
\toprule
$|W| \times |A|$ & Grounding & Solving & Total & Indist atoms \\
\midrule
$6 \times 4$ & 1\,ms & $<$1\,ms & 2\,ms & 82 \\
$10 \times 6$ & 6\,ms & 4\,ms & 10\,ms & 212 \\
$20 \times 8$ & 97\,ms & 58\,ms & 155\,ms & 888 \\
$50 \times 10$ & 3.2\,s & 1.8\,s & 5.0\,s & 6\,392 \\
$100 \times 15$ & 50.8\,s & 30.7\,s & 81.4\,s & 37\,343 \\
\bottomrule
\end{tabular}
\end{table}

\subsection{Structural Integrity Benchmark}

This evaluation is for H2, declarative integrity enforcement. It introduces three classes of malformed partitions: omission of world coverage, overlaps, and stale facts after agent removal. Each variety is tested 20 times for a $10 \times 4$ configuration, comparing the hybrid system to a Python baseline that uses explicit coverage and overlap validation.

Both systems detect all 60 classes of malformed partitions (100\% detection rate). The difference is qualitative: the \asp{} layer catches all three classes through two declarative integrity constraints already in the encoding, requiring no additional validation code. The Python baseline requires dedicated procedural logic for each class. While Python is faster per check (0.01\,ms vs.\ 3\,ms), the question is whether structural invariants live inside the encoding or alongside it as separate obligations.

\subsection{Explanation Completeness}

This experiment verifies H3 (explanation coverage) and provides a stress test for Proposition~\ref{prop:explanation} by simulating 100 random topology changes in the $12 \times 6$ configuration. Each test begins with an admissible belief profile, applies a random partition change to one agent (which is either refinement, coarsening, or reassignment of a cell that is neither refinement nor coarsening), and checks whether each admissibility violation is caused by a \texttt{newly\_indist} pair.

In the 100 tests, 673 violations are found. All violations (100\%) are caused by world pairs classified as \texttt{newly\_indist}, thus strongly supporting the proposition. All 54 coarsening tests have violations, and all violations are pinpointed by the explanation atoms. The \texttt{violates/3} atoms computed in the \asp{} code pinpoint each violation to the specific agent and world pair, instead of the global ``admissibility failed'' message. Since the tests are performed on randomly generated partitions, not the sensor fusion scenario, they test the general formal object. To confirm domain independence, the same framework was applied to a sensor fusion instance with three sensors over eight road conditions and 20 topology changes. All 12 violations traced to \texttt{newly\_indist} pairs (100\% coverage), three malformed edits were caught, and no rule changes were necessary.

\section{Discussion}\label{sec:discussion}

The benchmarks support the main thesis within the considered scope, as fact-only reconfiguration is feasible, declarative integrity constraints detect malformed edits without the need to add additional validation code, and explanation atoms have complete violation coverage. An ablation study on the sensor fusion scenario confirms that each \asp{} component adds distinct value: without explanation atoms, violations cannot be traced to their structural cause; without multi-shot solving, each topology change requires re-grounding.

A natural question is why the structural layer was not simply replaced with Python. If the partition is fixed, a procedural solution would be faster. The declarative layer provides significant benefit when partitions change during execution, because elaboration tolerance \cite{mccarthy1998elaboration} means only \texttt{partition/3} facts need editing, while a procedural solution would require updating validation logic, cell structures, and explanation code separately \cite{sadowski2025verifiable,sadowski2025explainable,fandinno2019xasp}. The \texttt{violates/3} and explanation atoms together provide a diagnostic chain from structural cause to numerical symptom, consistent with the pattern of explicitly assessing internal belief states in multi-agent systems \cite{kostka2025evaluating}.

The approach has several limitations. The grounding cost of the \texttt{violates/3} predicate has an upper bound of $O(|A| \cdot |W|^2)$, which in practice only allows us to interactively use the approach for approximately 50 worlds, although incremental grounding or agent-specific grounding heuristics could improve the limit. The explanations are diagnostic rather than prescriptive, i.e., the system will indicate what has changed but not how to fix it. Closing this gap is an interesting area of further research, which relates to the concept of abductive reasoning in \asp{}. The evaluation is based on a single domain, which has been artificially scaled, so the general applicability of the approach would be better supported by the extension of the framework to other domains. Other repair operators, e.g., maximum entropy or minimum KL-divergence, are not explored in the current study.

\section{Conclusion}\label{sec:conclusion}

In this paper, a system is presented that can handle changes to agents' observational capabilities during belief harmonization, while maintaining consistency and explaining what changed. The main idea presented here is that epistemic partitions not only determine what agents believe, but also what structural capabilities exist for representation, and how refinement and coarsening differ fundamentally from one another in terms of admissibility. Thus, when partitions change, a structural explanation for what has occurred and why is necessary, rather than remediation in terms of numbers.

In this regard, a formal framework for dynamic epistemic partition structures has been developed and implemented using a hybrid \asp{} + Python approach, in which changes in structure are translated into editing external atoms in \texttt{clingo} \cite{gebser2019multishot}, and explanation atoms are derived automatically by the solver, providing exact identification of what has been gained or lost in terms of distinctions. Six propositions define the formal properties of this framework, and evaluation has provided support for the hypotheses (H1)--(H3).

Three avenues for further research are presented, including providing prescriptive explanations through abductive reasoning, integrating with probabilistic \asp{} systems such as Plingo \cite{hahn2025plingo}, and neuro-symbolic approaches, in which neural beliefs are constrained by the structure enforced by \asp{}. On a broader scale, a design pattern has been developed, in which \asp{} functions as a structural runtime controller for numerical components in a system whose discrete constraints change during execution.

\bibliographystyle{eptcsalpha}
\bibliography{iclp}

\end{document}